\definecolor{myblue}{rgb}{0,0,0}
\definecolor{myred}{rgb}{0.8,0.1,0.16}
\newtheorem{theorem}{Theorem}
\newtheorem{remark}{Remark}
\newtheorem{definition}{Definition}
\newtheorem{proposition}{Proposition}
\begin{document}

\SetAuthorBlock{Bo Wang\CorrespondingAuthor}{
   Department of Mechanical Engineering,\\
   The City College of New York,\\
   The City University of New York,\\
   New York, NY 10031, USA \\
   email: bwang1@ccny.cuny.edu} 

\SetAuthorBlock{Tianyu Han}{
   Department of Mechanical Engineering,\\
   The City College of New York,\\
   The City University of New York,\\
   New York, NY 10031, USA \\
   email: than000@citymail.cuny.edu}

\SetAuthorBlock{Guangwei Wang}{School of Mechanical Engineering,\\
   Guizhou University,\\
   Guiyang, 550025, China \\
   email: gwwang@gzu.edu.cn} 

\title{Further Results on Safety-Critical Stabilization of Force-Controlled Nonholonomic Mobile Robots}

\keywords{safety-critical control, stabilization, control barrier functions, nonholonomic mobile robots}

\begin{abstract}
{In this paper, we address the stabilization problem for force-controlled nonholonomic mobile robots under safety-critical constraints. We propose a continuous, time-invariant control law based on the $\gamma m$-quadratic programming ($\gamma m$-QP) framework, which unifies control Lyapunov functions (CLFs) and control barrier functions (CBFs) to enforce both stability and safety in the closed‐loop system. For the first time, we construct a \emph{global}, time-invariant, strict Lyapunov function for the closed-loop nonholonomic mobile robot full-dynamic system with a nominal stabilization controller in polar coordinates; this strict Lyapunov function then serves as the CLF in the QP design. Next, by exploiting the inherent cascaded structure of the vehicle dynamics, we develop a CBF for the mobile robot via an integrator backstepping procedure. Our main results guarantee both asymptotic stability and safety for the closed-loop system. Both the simulation and experimental results are presented to illustrate the effectiveness and performance of our approach.
}
\end{abstract}

\date{}

\maketitle

\section{Introduction}

{
The study of control problems for nonholonomic systems has been carried out since the early 1980s\;---\;see \cite{kolmanovsky1995developments} for a survey. The main challenge is that, although these systems are controllable, it is impossible to achieve asymptotic stability of an isolated equilibrium using a continuous, time-invariant state feedback control law due to Brockett's necessary condition on stabilization \cite{brockett1983asymptotic}. Hence, the stabilization of nonholonomic mobile robots and the construction of corresponding control Lyapunov functions (CLFs) remain challenging problems of significant ongoing interest in the context of robustness analysis and controller design. See \cite{maghenem2019consensus} for a continuous time-varying control method and \cite{han2024safety} for a time-invariant control approach, along with corresponding \textit{strict} Lyapunov constructions.

{
Constructions of \textit{global} {strict} Lyapunov functions for nonholonomic systems are by no means trivial or straightforward. To the best of the authors’ knowledge, only a few studies have presented such global and strict Lyapunov functions for nonholonomic mobile robots \cite{maghenem2019consensus}. Very recently, global strict Lyapunov functions were developed for the kinematic unicycle model using backstepping and forwarding techniques in~\cite{todorovski2025modular,krstic2025integrator}.
}

Ensuring operational safety while achieving control objectives is a fundamental requirement in autonomous control systems. For instance, in practical applications, safety constraints\;---\;such as obstacle and collision avoidance between vehicles\;---\;must also be considered in addition to the set-point stabilization or trajectory tracking task for mobile robots \cite{wang2022robust,jankovic2024multiagent}. Achieving satisfactory control performance often requires aggressive maneuvers, while safety necessitates conservative actions and strict constraint adherence. The tension between performance and safety is particularly acute in mobile robots, whose nonholonomic dynamics inherently prevent continuous, time-invariant feedback from stabilizing the target configuration. As a result, enforcing both asymptotic stability and safety constraints simultaneously is far more challenging than in fully-actuated holonomic systems.}

{In the past decade, control barrier function (CBF)-based techniques have proven effective for systematically enforcing safety constraints \cite{ames2014control,ames2017control}. Since then, CBFs have been applied in a variety of domains, including walking robots \cite{ames2019control}, automotive systems \cite{xu2018correctness,han2024safety}, stochastic systems \cite{clark2021control}, and multi-agent systems \cite{jankovic2024multiagent}, to name a few. To ``mediate" the conflict between the safety constraints and the control objective (e.g., set-point stabilization, trajectory tracking, or mere open-loop steering of the system), numerous quadratic program (QP)-based control techniques have been developed in the literature \cite{ames2014control,ames2017control,xu2015robustness}. According to different types of QP formulation, the existing results may be categorized into CLF-CBF-based QP \cite{ames2014control,xu2015robustness,ames2017control,ames2019control}, CBF-based QP  \cite{xu2015robustness,gurriet2018towards,ames2019control,singletary2021safety}, and $\gamma m$-CLF-CBF-based QP ($\gamma m$-QP) methods \cite{jankovic2018robust,han2024safety}. Among the various methods, the $\gamma m$-QP approach is preferred in many applications due to its ability to guarantee asymptotic stability of the closed-loop system and its robustness in handling disturbances.}

Furthermore, applying CBFs directly to mobile robots presents significant challenges due to their inherent nonholonomic constraints, which complicate establishing a direct relationship between safety constraints and control inputs, particularly when the system's relative degree exceeds one \cite{glotfelter2019hybrid}. To address this issue, high-order CBFs have been developed in \cite{xiao2021high}. This extension ensures the forward invariance of appropriately defined, dynamically extended safe sets, thereby enabling controller synthesis via QP even for systems with higher relative degrees. However, constructing suitable high-order CBFs can be intricate, often requiring multiple differentiations of the barrier function and complex modifications to the safe set definition, which may hinder straightforward practical implementation.

In \cite{alan2023control}, a safety-critical controller is designed for connected automated vehicles, where the vehicles are modeled by integrators. Using the double‐integrator model, the multi‐agent collision avoidance problem has been studied via CBF approaches in \cite{jankovic2024multiagent}. Based on the first‐order unicycle (kinematic) model, CBF‐based obstacle avoidance has been addressed in \cite{taylor2022safe,haraldsen2024safety}. In particular, in \cite{taylor2022safe}, a CBF backstepping approach is proposed for the kinematic unicycle model. In \cite{haraldsen2024safety}, an obstacle‐avoidance strategy for nonholonomic‐integrator vehicles is proposed by regulating vehicle speed and orientation separately via two CBFs while maintaining a nonzero forward speed in dynamic environments using velocity obstacles. However, none of these existing works provides guarantees of asymptotic stability for the closed-loop system. Moreover, a more realistic model for vehicle applications is to consider the second-order full dynamical (kinematics-kinetics) model of the unicycle \cite{maghenem2017cascades,wang2022time}. However, to the best of the authors' knowledge, few studies have addressed the stabilization problem for force-controlled nonholonomic vehicles subject to safety-critical constraints.

In this paper, we address the stabilization problem for force-controlled nonholonomic mobile robots under safety-critical constraints. We propose a continuous, time-invariant control law based on the $\gamma m$-QP framework to enforce both stability and safety in the closed‐loop system.
The main contribution of this work is that, for the first time, we construct a \textit{global}, time-invariant, and \textit{strict} Lyapunov function for the closed-loop nonholonomic mobile robot (kinematic–dynamic) system under a nominal stabilization controller in polar coordinates. This global strict Lyapunov function then serves as the CLF in the $\gamma m$-QP design. Compared with our previous result in~\cite{han2024safety}, where the Lyapunov function is valid only within an arbitrarily large but still \textit{bounded} region, the present work develops a globally valid Lyapunov function whose definition and derivative properties hold for \textit{all} admissible states, without any restriction on the region of attraction. Compared with the designs in \cite{todorovski2025modular,krstic2025integrator}, our controller is much simpler and applied to the full dynamic unicycle system. We also present both simulation and experimental results to demonstrate the performance of the developed controller. In particular, the experimental results show that the proposed method is applicable to scenarios such as autonomous parking with obstacle avoidance and inter-vehicle collision avoidance.

The original $\gamma m$-QP framework is based on reciprocal CBFs \cite{jankovic2018robust}. However, in recent years, there has been a shift from reciprocal CBFs to zeroing CBFs, as reciprocal CBFs may exhibit poor robustness properties. Hence, in this work, we present the $\gamma m$-QP approach within the framework of \textit{zeroing} CBFs. Furthermore, distinct from our previous work \cite{han2024safety}, we also construct the \textit{zeroing} CBF using the integrator backstepping technique.

In our previous work \cite{han2024safety}, we construct a strict Lyapunov function for the closed-loop mobile robot with a nominal stabilization controller \textit{in the large} (i.e., on any compact subset of the state space), where it serves as a CLF in the safety-critical control design. However, the constructed Lyapunov function is not \textit{global}, meaning that it depends on the initial configuration of the vehicle. 
{To the best of the authors' knowledge, a \textit{global}, time-invariant, strict Lyapunov function has not yet been reported in the literature for the full kinematic-kinetic mobile robot system.
Moreover, the problem of eliminating potential undesired equilibria, e.g., via introducing additional constraints in the QP \cite{reis2020control}, and explicit robustness analyses---such as the treatment of saturation effects or the derivation of input-to-state stability (ISS) bounds, remain out of scope of this Letter.}

The structure of the remainder of the paper is as follows: Section \ref{sec:problem} presents the problem formulation and preliminaries on safety-critical control. Section \ref{sec:main} presents the main results, including the constructions of the CLF and CBF, and the controller design. Section \ref{sec:experiments} provides both simulation and experimental results that demonstrate the practical application of the theoretical developments. Finally, Section \ref{sec:conclusions} offers concluding remarks.

\section{Preliminaries on Safety-Critical Control}\label{sec:problem}

{
\textit{Notation:} Let $|\cdot|$ denote the Euclidean norm on $\mathbb{R}^n$. For a subset $S\subset \mathbb{R}^n$, $\partial S$ represents the boundary of $S$, and $\operatorname{int} S$ represents the interior of $S$. $\mathcal{K}$ is the class of continuous functions $\mathbb{R}_{\ge 0}\to\mathbb{R}_{\ge 0}$ which is zero at zero and strictly increasing; $\mathcal{K}_\infty$ is a subset of the class $\mathcal{K}$ functions which are unbounded. 
For a matrix $P\in\mathbb{R}^{n\times n}$, $\lambda_M(P)$ represents the maximum eigenvalue of $P$. Throughout this article, we omit the arguments of functions when they are clear from the context.

Let us consider a nonlinear control-affine system
\begin{equation}\label{eq:NL}
    \dot{x}=f(x)+g(x)u,
\end{equation}
where the state $x\in\mathbb{R}^n$ and the control $u\in\mathbb{R}^m$. We assume that $f:\mathbb{R}^n\to\mathbb{R}^n$ and $g:\mathbb{R}^n\to\mathbb{R}^{n\times m}$ are locally Lipschitz functions and $f(0)=0$. Recall that a $C^\infty$ function $V:\mathbb{R}^n\to\mathbb{R}_{\ge 0}$ is said to be a \textit{(global) CLF} for \eqref{eq:NL}, if $V$ is positive definite, proper, and satisfies the following implication:
\begin{equation}\label{eq:CLF}
    L_gV(x)=0\implies L_fV(x)+\alpha(|x|)<0,~\forall x\in \mathbb{R}^n\backslash\{0\},
\end{equation}
where $\alpha\in\mathcal{K}$ \cite{sontag1998mathematical}.

Safety can be formulated as the forward invariance of designated sets within the system's state space. A set $C\subset \mathbb{R}^n$ is said to be \textit{forward invariant}, if for each initial condition $x_\circ \in C$, the resulting solution of \eqref{eq:NL} $x(t;x_\circ)\in C$ for all $t\ge 0$. If the set $C$ is forward invariant, system \eqref{eq:NL} is said to be \textit{safe} on the set $C$.

Consider the safety set $C$ defined as the 0-superlevel set of a $C^1$ function $h:\mathbb{R}^n\to\mathbb{R}$, {i.e.},
\begin{equation}\label{eq:safeset}
    {C}:=\{x\in\mathbb{R}^n:h(x)\ge 0\}.
\end{equation}
The following definition is standard \cite{ames2017control}.
\begin{definition}[CBF]\rm 
    Let ${C}$ be defined by \eqref{eq:safeset}. Then, $h$ is a (zeroing) \textit{CBF} for \eqref{eq:NL} if there exists $\alpha_h\in\mathcal{K}$ such that the following implication holds:
    \begin{equation}\label{eq:CBF}
        L_gh(x)=0 \implies L_fh(x) + \alpha_h(h(x))\ge 0, \;\; \forall x\in C.
    \end{equation}
\end{definition}

An effective method for combining a CLF and a CBF was developed in \cite{jankovic2018robust}, known as the \(\gamma m\)-QP approach. The original \(\gamma m\)-QP formulation in \cite{jankovic2018robust} is based on reciprocal CBFs. Here, we restate the \(\gamma m\)-QP problem using zeroing CBFs for consistency with our framework as follows:
\begin{align}
&\min ~ \frac{1}{2}(u^\top u +m \delta^\top\delta) \label{eq:QP}\\
\text{s.t.~~}&  \gamma_f(L_f V(x)+\alpha(|x|))+L_gV(x)u+L_gV(x)\delta\le 0 \notag\\
& -L_f h(x) - \alpha_h(h(x)) - L_gh(x)u \le 0   \notag
\end{align}
where $m\ge 1$, $\gamma_f$ is defined as $\gamma_f(s):=\gamma s$ if $s\ge 0$ and $\gamma_f(s):= s$ if $s<0$, and $\gamma\ge 1$. Due to the slack variable $\delta$, the $\gamma m$-QP problem \eqref{eq:QP} is always feasible. 
Note that in \eqref{eq:QP} we need $\gamma\ge 1$ to overcome the impact of $\delta$ when $L_f V(x)+\alpha(|x|)$ is positive. The closed-form solution to the $\gamma m$-QP problem \eqref{eq:QP} can be obtained by applying the KKT conditions. The resulting control law given by \eqref{eq:QP} is Lipschitz continuous in every subset of the safe set $C$ not containing the origin.
}

\section{Problem Formulation and Main Results}\label{sec:main}

{
Consider the nonholonomic mobile robot system with kinematics
\begin{equation}\label{eq:6}
    \begin{cases}
        \dot{x} = v \cos{\theta}, \\
        \dot{y} = v \sin{\theta}, \\
        \dot{\theta} = \omega,
    \end{cases}
\end{equation}
where $(x,y)\in\mathbb{R}^2$ denotes the Cartesian coordinates of the vehicle on the plane, $\theta\in \mathbb{R}$ denotes its orientation, $v\in\mathbb{R}$ and $\omega\in\mathbb{R}$ denote the linear and angular velocities of the vehicle, respectively.  In addition, the kinetics of the vehicle are described by the force-balance equation
\begin{equation}\label{eq:kinetics}
    \begin{bmatrix}
        m & 0\\
        0 & I
    \end{bmatrix}
    \begin{bmatrix}
        \dot{v}\\ \dot{\omega}
    \end{bmatrix}
    =\frac{1}{r}
    \begin{bmatrix}
        1 & 1\\
        2R & -2R
    \end{bmatrix}
    \begin{bmatrix}
        \tau_l \\ \tau_r
    \end{bmatrix},
\end{equation}
where $\tau_l$ and $\tau_r$ are the left and right wheel torques, respectively, $m$ is the mass, $I$ is the vehicle inertia, $r$ is the wheel radius, and $R$ is the wheel axle length \cite{wang2021formation}. 

The proposed control scheme contains a feedback transformation that is designed as
\begin{equation}\label{eq:transform}
    \begin{bmatrix}
        \tau_l \\ \tau_r
    \end{bmatrix}=
    \frac{r}{2}
    \begin{bmatrix}
        m &\frac{I}{2R} \\ m &-\frac{I}{2R}
    \end{bmatrix}
    \begin{bmatrix}
        u_v \\ u_\omega
    \end{bmatrix}.
\end{equation}
After substituting \eqref{eq:transform} in \eqref{eq:kinetics}, it yields
\begin{equation}\label{eq:dynamics}
    \dot{v}=u_v,\quad \dot{\omega}=u_\omega.
\end{equation}

The safety-critical stabilization problem entails designing a control strategy that ensures the closed-loop system trajectories remain within a predefined safe set $C$, defined by \eqref{eq:safeset}, at all times $t\ge 0$, while simultaneously guaranteeing that the origin of the closed-loop system is asymptotically stable.

In the $\gamma m$-QP framework, the CLF and CBF are individually constructed for the mobile robot system. Subsequently, the control input is synthesized by solving the $\gamma m$-QP described in \eqref{eq:QP}.
}

\subsection{Construction of the global CLF}

{
To address the nonholonomicity, we construct the CLF for the mobile robot in polar coordinates, where the position of the robot in polar coordinates is given by the distance to the origin $\rho$ and the bearing angle $\psi$, i.e.,
\begin{equation}
    \rho:=|(x,y)|,\quad \psi:=\operatorname{atan2}(-y,-x),
\end{equation}
where `$\operatorname{atan2}$' represent the 2-argument arctangent function. Defining the variable $\alpha:=\psi - \theta$, the kinematics of the vehicle become
\begin{equation}\label{eq:kinematics}
    \begin{cases}
        \dot{\rho} = -v \cos{\alpha}, \\
        \dot{\alpha} = \frac{v}{\rho} \sin{\alpha} - \omega, \\
        \dot{\psi} = \frac{v}{\rho} \sin{\alpha}.
    \end{cases}
\end{equation}
We have the following result.
\begin{proposition}[Global CLF]\label{prop:1}
    Consider the mobile robot system \eqref{eq:kinematics} and \eqref{eq:dynamics}. Then, there exists a constant $\bar{\mu}>0$ such that for all $\mu\in(0,\bar{\mu}]$, the function $V:\mathbb{R}_{>0}\times\mathbb{R}^4\to \mathbb{R}_{\ge 0}$, defined as
    \begin{equation}\label{eq:V}
    V(\rho,\alpha,\psi,z,\tilde{\omega}):=\mu\int_0^{W^\sharp(\rho,\alpha,\psi)} \left( \frac{e^s -1}{e^s} \right) {\rm d}s+U\left(z,\tilde{\omega}\right),
    \end{equation}
    is a global CLF  for \eqref{eq:kinematics} and \eqref{eq:dynamics} that satisfies the small control property, where $\tilde{v}:=v-v^*$, $\tilde{\omega}:=\omega-\omega^*$, $z:={\tilde{v}}/{\rho}$,
    \begin{subequations}
        \begin{eqnarray*}
            W^\sharp(\rho,\alpha,\psi)&:=&\ln(W(\rho,\alpha,\psi) + 1), \\
            W(\rho,\alpha,\psi)&:=& W_1(\rho,\alpha,\psi)+W_2(\alpha,\psi)+\int_0^{W_1(\rho,\alpha,\psi)}Q(l){\rm d}l, \\
            W_1(\rho,\alpha,\psi)&:=&\frac{1}{2}\left(\rho^2+\alpha^2+\lambda\psi^2 \right),\\
            W_2(\alpha,\psi)&:=&p_{11}\alpha^2+2p_{12}\alpha\psi+p_{22}\psi^2,\\
            P&:=&\begin{bmatrix}
                \dfrac{1+\lambda}{2k_\alpha\lambda} & \dfrac{1}{2k_\rho \lambda} \\
                \dfrac{1}{2k_\rho \lambda} & \dfrac{k_\alpha^2+k_\rho^2\lambda^2+k_\rho^2\lambda}{2k_\alpha k_\rho^2\lambda}
            \end{bmatrix},\\
            Q(l)&:=&\frac{16}{\pi^2}\frac{k_\rho^2}{k_\alpha}\lambda^2\lambda_M^2(P)l,\\
            U(z,\tilde{\omega})&:=&\frac{1}{2}\left(\frac{z^2}{k_z}+\frac{\tilde{\omega}^2}{k_\omega} \right),\\
            v^*&:=&k_\rho\cos(\alpha)\rho,\\
            \omega^*&:=&k_\alpha\alpha+k_\rho \operatorname{sinc}(2\alpha)(\alpha+\lambda \psi),
        \end{eqnarray*}
    \end{subequations}
    the parameter $\lambda\ge 1$, the parameters $k_{\rho}$, $k_{\alpha}$, $k_z$, and $k_\omega$ are arbitrary positive constants, and $P=[p_{ij}]$. That is, $p_{ij}$ represents the $(i,j)$-th entry of the matrix $P$\footnote{In this paper, `$\operatorname{sinc}(\cdot)$' represents the unnormalized sinc function, which is defined as $\operatorname{sinc}(s):=\sin(s)/s$ if $s\ne 0$ and $\operatorname{sinc}(0)=1$. Note that the function $\operatorname{sinc}$ is smooth everywhere and globally bounded on $\mathbb{R}$.}.
\end{proposition}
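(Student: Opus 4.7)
My plan is to verify the requirements for $V$ to be a CLF (smoothness, positive definiteness, radial unboundedness, and the Sontag implication \eqref{eq:CLF}) by exploiting the cascaded structure of \eqref{eq:kinematics}--\eqref{eq:dynamics} in a backstepping fashion. Smoothness is immediate since $\operatorname{sinc}$ is $C^{\infty}$. For positive definiteness, $W_{1}$ is quadratic positive definite in $(\rho,\alpha,\psi)$; the matrix $P$ prescribed in the proposition is in fact the unique symmetric solution of the Lyapunov equation $A^{\top}P+PA=-I$ for the linearized $(\alpha,\psi)$ dynamics (with $A=\bigl[\begin{smallmatrix}-k_{\alpha} & -k_{\rho}\lambda\\ k_{\rho} & 0\end{smallmatrix}\bigr]$), and is hence positive definite for every $\lambda\ge 1$, so $W_{2}\ge 0$; the integral $\int_{0}^{W_{1}}Q(l)\,dl$ is nonnegative; the integrand $(e^{s}-1)/e^{s}=1-e^{-s}$ vanishes only at $s=0$; and $U$ is quadratic positive definite in $(z,\tilde\omega)$. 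Radial unboundedness follows since $W^{\sharp}=\ln(W+1)\to\infty$ as $W\to\infty$ makes the outer integral grow unboundedly, while $U$ supplies growth in $(z,\tilde\omega)$.

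The second step is the kinematic strict-Lyapunov inequality. Using $\sin\alpha\cos\alpha=\alpha\operatorname{sinc}(2\alpha)$, the closed loop of \eqref{eq:kinematics} under $v=v^{*}$, $\omega=\omega^{*}$ collapses to
\[
\dot\rho=-k_{\rho}\cos^{2}(\alpha)\,\rho,\;\;
\dot\alpha=-k_{\alpha}\alpha-k_{\rho}\lambda\operatorname{sinc}(2\alpha)\,\psi,\;\;
\dot\psi=k_{\rho}\alpha\operatorname{sinc}(2\alpha),
\]
so $\dot W_{1}=-k_{\rho}\cos^{2}(\alpha)\rho^{2}-k_{\alpha}\alpha^{2}$, which is only negative \emph{semi}definite. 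With $P$ solving the linearized Lyapunov equation, $\dot W_{2}$ contributes a $-(\alpha^{2}+\psi^{2})$ term plus a nonlinear perturbation proportional to $1-\operatorname{sinc}(2\alpha)$ times a quadratic form in $(\alpha,\psi)$; this perturbation is not dominated by $-(\alpha^{2}+\psi^{2})$ alone, so the integrator term $\int_{0}^{W_{1}}Q(l)\,dl$ is included to provide the missing strictification. Its derivative $Q(W_{1})\dot W_{1}$ is quartic-negative in $(\rho,\alpha,\psi)$, and the slope $\frac{16}{\pi^{2}}\frac{k_{\rho}^{2}}{k_{\alpha}}\lambda^{2}\lambda_{M}^{2}(P)$ is chosen exactly so that this quartic contribution absorbs the sign-indefinite $(1-\operatorname{sinc}(2\alpha))$-perturbations uniformly on the whole state space. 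Composing with the outer shaping yields
\[
\frac{d}{dt}\,\mu\!\int_{0}^{W^{\sharp}}\!(1-e^{-s})\,ds=\mu\,\frac{1-e^{-W^{\sharp}}}{W+1}\,\dot W,
\]
still negative definite in $(\rho,\alpha,\psi)$ and, crucially, with globally bounded gradient in those variables thanks to the saturating factor $1-e^{-W^{\sharp}}$ and the denominator $W+1$.

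For the third step I backstep through \eqref{eq:dynamics}: substituting $v=v^{*}+\rho z$ and $\omega=\omega^{*}+\tilde\omega$ into \eqref{eq:kinematics} produces perturbation terms in $\dot V$ that are linear in $(z,\tilde\omega)$ with coefficients polynomial in $(\rho,\alpha,\psi)$; multiplication by the globally bounded factor $(1-e^{-W^{\sharp}})/(W+1)$ renders these cross terms state-uniformly bounded, so they can be absorbed by Young's inequality against the $(z^{2}/k_{z},\tilde\omega^{2}/k_{\omega})$ contribution of $\dot U$, provided $\mu\le\bar\mu$ for a threshold $\bar\mu$ depending only on the design gains. Since the controls enter only through $\dot z,\dot{\tilde\omega}$, one has $L_{g}V=\bigl[\,z/(k_{z}\rho),\,\tilde\omega/k_{\omega}\,\bigr]$, so $L_{g}V=0$ forces $z=\tilde\omega=0$; on this set $V$ reduces to its kinematic part and $L_{f}V$ equals the strictly negative kinematic derivative from step two. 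This gives $L_{f}V+\alpha_{V}(|x|)<0$ for a suitable $\alpha_{V}\in\mathcal{K}$, establishing \eqref{eq:CLF}; the small control property follows because $\partial V/\partial z$ and $\partial V/\partial\tilde\omega$ vanish at $z=\tilde\omega=0$ while $L_{f}V$ is continuous there.

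The most delicate part is step two: proving $\dot W$ negative definite \emph{globally} in $\psi$ despite the indefinite $\operatorname{sinc}(2\alpha)$ coupling, which is exactly what the linearized-Lyapunov choice of $P$ and the specific growth rate of $Q(W_{1})$ are designed for, and where the explicit constants in the proposition come from. A secondary obstacle, essential to the \emph{global} (rather than semiglobal) claim, is dominating the unbounded raw backstepping cross terms; this is precisely what the logarithmic composition and the saturating integrand $(e^{s}-1)/e^{s}$ in \eqref{eq:V} achieve, and what distinguishes this construction from our semiglobal version in \cite{han2024safety}.
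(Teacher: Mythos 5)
Your proposal is correct and mirrors the paper's argument: positive definiteness via the Lyapunov solution $P$ for the Hurwitz matrix $A$, strictification of the semidefinite $\dot W_1$ by the auxiliary storage $\int_0^{W_1}Q$ whose derivative cancels the $\operatorname{sinc}$-perturbation term, global boundedness of $L_{g_{\mathrm{nom}}}W/(W+1)$ from the $\ln(W+1)$ and $1-e^{-s}$ composition, and domination of the backstepping cross-terms by $-|\zeta|^2$ via Young's inequality for $\mu\le\bar\mu$. The only cosmetic difference is that you verify the Sontag implication directly through $L_gV=0\Rightarrow(z,\tilde\omega)=0$, whereas the paper exhibits the explicit feedback-linearizing controller \eqref{eq:16} and shows $\dot V<0$ along it; both routes are equivalent.
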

\begin{proof}
    See Appendix \ref{app:A}.
\end{proof}
}

\begin{remark}\rm 
    It is necessary to point out the difference between the global Lyapunov function \eqref{eq:V} and the Lyapunov function we developed in our previous work \cite{han2024safety}. In \cite{han2024safety}, the Lyapunov function is valid within an \textit{arbitrarily large}, but still {bounded} region. The formulation in \cite{han2024safety} could guarantee safety-critical stabilization, but the parameters of the associated Lyapunov function need to be chosen according to the initial conditions. In contrast, the Lyapunov function developed in the present paper is globally valid, which eliminates the need for any parameter adjustment based on initial states, thereby providing a rigorous analytical foundation for safety-critical control across the full admissible space.
\end{remark}

\begin{remark}\label{rmk:2}\rm 
    The new velocity coordinate $z:=(v-v^*)/{\rho}$ is used in the dynamic error equation \eqref{eq:115}. However, one must verify the boundedness of the term $\dot{v}^*(t)$ in \eqref{eq:115} along trajectories because this term is used in the control law \eqref{eq:16}. Direct calculations show that
    \begin{subequations}
    \begin{eqnarray}
        \frac{{\rm d}v^*}{{\rm d}t}&=& k_{\rho}(\cos(\alpha)\dot{\rho}-\sin(\alpha)\dot{\alpha}\rho)  \notag \\
        &=&  k_{\rho}\left(-\cos^2(\alpha)v-\sin(\alpha)\rho\left(\frac{v}{\rho} \sin{\alpha} - \omega\right)\right). \notag 
    \end{eqnarray}
    \end{subequations}
    It can be observed that the term $1/\rho$ cancels out, and thus the boundedness of $\dot{v}^*(t)$ follows.
\end{remark}

\begin{remark}\label{rmk:3}\rm 
    It should be noted that the Lyapunov function $V$ is positive definite in the coordinates $(\rho,\alpha,\psi,z,\tilde{\omega})$. Consequently, the global asymptotic stability of the nominal error system~\eqref{eq:14}–\eqref{eq:115} is also established in these coordinates. In Cartesian coordinates or in the original velocity coordinates, only the asymptotic convergence of trajectories is guaranteed.
\end{remark}

\subsection{Construction of the CBF}
Mechanical and robotic systems often exhibit cascaded structures. The problem of constructing CBFs for such systems has been investigated in several works. For example, in \cite{taylor2022safe,cohen2024safety}, the authors propose a method for synthesizing zeroing CBFs for higher-order systems by leveraging CBFs designed for reduced-order models. In \cite{han2024safety}, a systematic procedure is proposed for constructing reciprocal CBFs for cascaded systems by using the CBF associated with the kinematic model through integrator backstepping. In this section, we construct the zeroing CBF for the mobile robot \eqref{eq:6} and \eqref{eq:dynamics} in Cartesian coordinates.

Following the similar integrator backstepping method, we have the following result.

\begin{proposition}[CBF]\label{prop:2}
    Consider the mobile robot system \eqref{eq:6} and \eqref{eq:dynamics}. Assume that the admissible set $C_0$ is defined as the 0-superlevel set of a given continuously differentiable function $h_0:\mathbb{R}^2\to \mathbb{R}$, i.e., 
    \begin{equation}
        C_0:=\{(x,y)\in\mathbb{R}^2: h_0(x,y)\ge 0\}.
    \end{equation}
    Then, the function $h:\mathbb{R}^4\to\mathbb{R}$ given by
    \begin{equation}\label{eq:h16}
        h(x,y,v,w):=h_0(x,y) - l_vv^2 - l_\omega\omega^2
    \end{equation}
    is a CBF for \eqref{eq:6} and \eqref{eq:dynamics}, where $l_v$, $l_\omega$ are two positive constants.    
\end{proposition}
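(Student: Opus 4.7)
The plan is to verify the CBF condition \eqref{eq:CBF} directly, by computing the two Lie derivatives of $h$ and observing that the required implication collapses to a trivial inequality on the locus where $L_g h$ vanishes.

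First I would cast \eqref{eq:6}--\eqref{eq:dynamics} in control-affine form with state $\xi=(x,y,\theta,v,\omega)\in\mathbb{R}^5$ and input $(u_v,u_\omega)$; the drift is $f(\xi)=(v\cos\theta,\;v\sin\theta,\;\omega,\;0,\;0)^{\top}$ and the control matrix $g$ has the $2\times 2$ identity in its last two rows and zeros elsewhere. Since the barrier \eqref{eq:h16} depends on the configuration only through $h_0(x,y)$ and is independent of $\theta$, a direct calculation yields
\begin{align*}
L_g h(\xi) &= \begin{bmatrix} -2l_v v & -2l_\omega\omega \end{bmatrix},\\
L_f h(\xi) &= v\!\left(\tfrac{\partial h_0}{\partial x}(x,y)\cos\theta+\tfrac{\partial h_0}{\partial y}(x,y)\sin\theta\right).
\end{align*}

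Because $l_v,l_\omega>0$, the condition $L_g h(\xi)=0$ forces $v=0$ and $\omega=0$. On this ``bad'' set the drift term $L_f h(\xi)$ also vanishes identically, and $h$ reduces to $h_0(x,y)$. Consequently, for any $\xi\in C$ at which $L_g h(\xi)=0$ we have $h_0(x,y)=h(\xi)\ge 0$; picking any $\alpha_h\in\mathcal{K}$ then gives
$$L_f h(\xi)+\alpha_h(h(\xi))=\alpha_h(h_0(x,y))\ge 0,$$
which verifies \eqref{eq:CBF} and proves the claim.

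There is no real obstacle in this argument: the whole content is already packaged into the construction \eqref{eq:h16}. The integrator-backstepping flavor lies precisely in augmenting $h_0$ with the negative quadratic penalties $-l_v v^2-l_\omega\omega^2$, which routes both inputs $u_v$ and $u_\omega$ into $L_g h$ in such a way that the only points where $L_g h$ degenerates are the zero-velocity configurations---and at those points the kinematic drift in $h$ vanishes as well, so the admissibility margin $h_0\ge 0$ inherited from $C_0$ carries through directly. The only mild subtlety worth flagging is that $h$ is independent of $\theta$, so $C$ is a cylinder in $\mathbb{R}^5$; this poses no problem for the definition, since the implication \eqref{eq:CBF} is quantified pointwise over $C$.
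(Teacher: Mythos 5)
Your proof is correct and takes essentially the same approach as the paper: you observe that $L_g h = 0$ forces $(v,\omega)=(0,0)$, at which point the drift term $L_f h$ vanishes and $h$ collapses to $h_0\ge 0$, so the implication \eqref{eq:CBF} holds trivially with any $\alpha_h\in\mathcal{K}$. The paper's version adds a brief preliminary remark that $h_0$ is already a CBF for the driftless kinematic subsystem, but that step is expository and your direct verification for the full cascade is the same argument.
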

\begin{proof}
    Let us define \( q := [x~y~\theta]^\top \) and \( \mathbf{v} := [v~\omega]^\top \). Then, the kinematic system \eqref{eq:6} can be expressed in the control-affine form
\begin{equation}\label{eq:116}
    \dot{q} = f_0(q) + g_0(q)\mathbf{v},
\end{equation}
where \( f_0(q) \equiv 0 \) and $g_0(q) := 
    \begin{bmatrix}
        \cos\theta & 0 \\
        \sin\theta & 0 \\
        0 & 1
    \end{bmatrix}$.
    
We first show that the function \( h_0 \) is a CBF for the kinematic system \eqref{eq:116} on the set \( C_0 \), assuming that the velocity \( \mathbf{v} \) is the control input. This follows directly from the fact that the kinematic system \eqref{eq:116} is driftless, i.e., \( f_0 \equiv 0 \). As a result, the CBF condition \eqref{eq:CBF} is trivially satisfied since \( L_{f_0} h_0 \equiv 0 \), and for all \( (x, y) \in C_0 \) and any class-\( \mathcal{K} \) function \( \alpha_h \), it holds that \( \alpha_h(h_0(x, y)) \ge 0 \).

Let us denote $\mathbf{x}:=[q~\mathbf{v}]^\top$, $u:=[u_v~u_\omega]^\top$,
\begin{equation*}
    F(\mathbf{x}):=\begin{bmatrix}
        f_0(q) + g_0(q)\mathbf{v} \\ 0
    \end{bmatrix},\quad\text{and}\quad
    G:=\begin{bmatrix}
        0 \\ I
    \end{bmatrix}.
\end{equation*}
Then, the cascaded system \eqref{eq:6} and \eqref{eq:dynamics} can be written as
\begin{equation}\label{eq:117}
    \dot{\mathbf{x}} = F(\mathbf{x})+Gu.
\end{equation}
Next, we verify the condition \eqref{eq:CBF} for the function \( h \) and \eqref{eq:117}. Note that $L_Gh=\frac{\partial h}{\partial \mathbf{v}}=0$ implies that $\mathbf{v}=0$. Hence, on the set $\{\mathbf{v}=0\}$, we have
\begin{equation}
    \left(L_Fh\right)|_{\mathbf{v}=0}=\left.\frac{\partial h}{\partial q}(f_0+g_0 \mathbf{v})\right|_{\mathbf{v}=0}=0,
\end{equation}
and $h|_{\mathbf{v}=0}=h_0$. That is, $\alpha_h\left(h|_{\mathbf{v}=0}\right)=\alpha_h(h_0(x,y))\ge 0$ for all $(x,y)\in C_0$. Therefore, we verify the implication \eqref{eq:CBF} and thus, $h$ is a CBF for the system \eqref{eq:6} and \eqref{eq:dynamics}.
\end{proof}

\begin{remark}\rm
    It can be seen from~\eqref{eq:h16} and~\eqref{eq:dynamics} that the constructed CBF $h$ has a relative degree of $\{1,1\}$ for all states satisfying $(v,\omega)\neq(0,0)$. On the set ${(v,\omega)=(0,0)}$, it holds that $L_F h = 0$ and $\alpha_h(h)\ge 0$; hence, the safety property still holds. This means that even in those states where $u$ has no influence on $\dot{h}$, the system remains safe. This is intuitively clear because, in such cases, the mobile robot simply stops (i.e., $(v,\omega)=(0,0)$). With the CBF $h$, the robot may safely stop in front of an obstacle while maintaining the safety property. This behavior is reminiscent of the undesirable equilibria observed in CLF–CBF–based QPs~\cite{reis2020control} and may be solved by introducing additional QP constraints \cite{reis2020control}. However, the problem of eliminating potential undesired equilibria remains out of the scope of this Letter.
\end{remark}

\subsection{Safety-Critical Control Design}

We have constructed a global CLF and a zeroing CBF for the nonholonomic mobile robot system, as presented in Propositions~\ref{prop:1} and~\ref{prop:2}, respectively. Based on these constructions, the safety-critical stabilization control law can be derived by solving the $\gamma m$-QP problem \eqref{eq:QP}. It is worth noting that the original $\gamma m$-QP formulation in~\cite{jankovic2018robust} is based on reciprocal CBFs. For completeness, we present parallel results of the $\gamma m$-QP problem \eqref{eq:QP} using zeroing CBFs.

\begin{theorem}\label{thm:1}
	Assume that the system \eqref{eq:NL} admits a CLF $V(x)$ and a CBF $h(x)$, and that $0\in \operatorname{int} {C}$. Then, the $\gamma m$-QP problem \eqref{eq:QP} is feasible and the resulting control law is given by
	\begin{equation}\label{eq:QPcontrol}
		u^\star(x):=\left\{
		\begin{aligned}
			&0,                            && x\in\Omega_{\overline{\rm{cbf}}}^{\overline{\rm{clf}}} \cup \{0\}, \\
			&-\frac{m}{m+1}\frac{\bar{a}_1}{|b_1|^2}b_1^\top,  && x\in\Omega_{\overline{\rm{cbf}}}^{{\rm{clf}}}, \\
			&-\frac{a_2}{|b_2|^2}b_2^\top, && x\in\Omega_{{\rm{cbf}}}^{\overline{\rm{clf}}}, \\
			&-\mu_1 b_1^\top - \mu_2b_2^\top, &&x\in\Omega_{{\rm{cbf}}}^{{\rm{clf}}},
		\end{aligned}
		\right.
	\end{equation}
	where $a_1:=L_fV(x)+\alpha(|x|)$, $\bar{a}_1:=\gamma_f(a_1)$, $b_1:=L_gV(x)$, $a_2:=-L_fh(x)-\alpha_h(h(x))$, $b_2:=-L_gh(x)$, 
	\begin{align*}
		\mu_1&:=\frac{|b_2|^2\bar{a}_1-b_1b_2^\top a_2}{(1+\frac{1}{m})|b_1|^2|b_2|^2-|b_1b_2^\top|^2},\\
		\mu_2&:=\frac{-b_1b_2^\top\bar{a}_1+(1+\frac{1}{m})|b_1|^2a_2}{(1+\frac{1}{m})|b_1|^2|b_2|^2-|b_1b_2^\top|^2},
	\end{align*}
	and
	\begin{align*}
		\Omega_{\overline{\rm{cbf}}}^{\overline{\rm{clf}}}&:=\left\{x\in\mathbb{R}^n:a_1<0,a_2<0\right\},\\
		\Omega_{\overline{\rm{cbf}}}^{{\rm{clf}}}&:=\left\{x\in\mathbb{R}^n:a_1\ge 0, a_2<\frac{m}{m+1}\frac{b_2b_1^\top}{|b_1|^2}\bar{a}_1\right\},\\
		\Omega_{{\rm{cbf}}}^{\overline{\rm{clf}}}&:=\left\{x\in\mathbb{R}^n:a_2\ge 0, \bar{a}_1<\frac{b_1b_2^\top}{|b_2|^2}a_2\right\},\\
		\Omega_{{\rm{cbf}}}^{{\rm{clf}}}&:=\left\{x\in\mathbb{R}^n\backslash\Omega_{\overline{\rm{cbf}}}^{\overline{\rm{clf}}}:\bar{a}_1\ge \frac{b_1b_2^\top}{|b_2|^2}a_2, a_2\ge \frac{m}{m+1}\frac{b_1b_2^\top}{|b_1|^2}\bar{a}_1\right\}.
	\end{align*}
	Furthermore, under the control law \eqref{eq:QPcontrol}, the set ${C}$ is forward invariant. Moreover, if the CLF $V$ satisfies the small control property and if we select $\frac{\gamma m}{m+1}=1$, then the origin of the closed-loop system is asymptotically stable.
\end{theorem}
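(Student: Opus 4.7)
The proof naturally decomposes into three parts: (i) deriving the closed-form solution of the $\gamma m$-QP, (ii) establishing forward invariance of the safe set $C$, and (iii) proving asymptotic stability of the origin under the condition $\gamma m/(m+1)=1$.

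For part (i), I would form the Lagrangian
$$L = \tfrac{1}{2}(u^\top u + m\,\delta^\top\delta) + \lambda_1(\bar a_1 + b_1 u + b_1\delta) + \lambda_2(a_2 + b_2 u),$$
and apply the KKT stationarity conditions, which yield $u = -\lambda_1 b_1^\top - \lambda_2 b_2^\top$ and $\delta = -(\lambda_1/m)\,b_1^\top$. I would then enumerate the four possible active-constraint patterns. In each pattern, the inactive multipliers vanish, the active constraints are solved for the remaining multipliers, and complementary slackness together with dual feasibility $\lambda_i\ge 0$ and primal feasibility of the inactive constraint carve out the corresponding region $\Omega_{\bullet}^{\bullet}$ exactly as stated. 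The only nontrivial algebraic step is the doubly-active case, where $(\mu_1,\mu_2)$ solves a $2{\times}2$ linear system whose coefficient matrix has determinant $(1+1/m)|b_1|^2|b_2|^2 - (b_1 b_2^\top)^2$; this is at least $(1/m)|b_1|^2|b_2|^2>0$ by Cauchy--Schwarz, so the system is uniquely solvable and yields the stated formulas. Because the four regions cover $\mathbb{R}^n$ and each admits a primal-dual feasible point, the QP is feasible everywhere.

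For part (ii), I would note that in every region the CBF row of the QP is honoured: it is either inactive, in which case $a_2\le 0$ together with the region-specific $u^\star$ gives $\dot h + \alpha_h(h) = -a_2 - b_2 u^\star \ge 0$, or active with equality, giving $\dot h + \alpha_h(h) = 0$. Since $u^\star$ is locally Lipschitz on every compact subset of $C\setminus\{0\}$, Nagumo's theorem in its CBF form (Ames--Xu--Tabuada) delivers forward invariance of $C$.

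For part (iii), I would evaluate $\dot V = a_1 - \alpha(|x|) + b_1 u^\star$ region by region. In $\Omega_{\overline{\rm cbf}}^{\overline{\rm clf}}$, $u^\star=0$ and $a_1<0$, so $\dot V < -\alpha(|x|)$. In $\Omega_{\overline{\rm cbf}}^{\rm clf}$, $\bar a_1=\gamma a_1$ with $a_1\ge 0$, so $b_1 u^\star = -\frac{\gamma m}{m+1}\,a_1 = -a_1$ by the hypothesis $\gamma m/(m+1)=1$, hence $\dot V=-\alpha(|x|)$. In $\Omega_{\rm cbf}^{\overline{\rm clf}}$, the region inequality $\bar a_1 < (b_1 b_2^\top/|b_2|^2)\,a_2$ combined with $\bar a_1\ge a_1$ forces $a_1 < (b_1 b_2^\top/|b_2|^2)\,a_2$, so again $\dot V < -\alpha(|x|)$. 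In the doubly-active region, using the active CLF equation to eliminate $\mu_1|b_1|^2$ in favour of $\bar a_1$ and $\mu_2 b_1 b_2^\top$, a direct computation gives
$$\dot V + \alpha(|x|) \;=\; a_1 - \tfrac{m}{m+1}\bar a_1 - \tfrac{\mu_2}{m+1}\,b_1 b_2^\top,$$
where the first difference is non-positive thanks to $\gamma m/(m+1)=1$, and the residual involving $\mu_2 b_1 b_2^\top$ is absorbed using the dual-feasibility inequalities that define $\Omega_{\rm cbf}^{\rm clf}$. Continuity of $u^\star$ at the origin follows from the small control property, and a standard Lyapunov argument (with LaSalle to handle points where only $\dot V \le 0$ is obtained) then yields asymptotic stability.

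The hard part will be the doubly-active region $\Omega_{\rm cbf}^{\rm clf}$. There the slack variable $\delta$ is nonzero, so the naive CLF decrease is \emph{a priori} spoiled by a $b_1\delta$ term, and the bookkeeping required to absorb the residual $\mu_2 b_1 b_2^\top/(m+1)$ into $-\alpha(|x|)$ — treating separately the sub-cases $a_1\ge 0$ vs.\ $a_1<0$ and the possible signs of $b_1 b_2^\top$ — is the non-routine step. The hypothesis $\gamma m/(m+1)=1$ is precisely the coupling between the relaxation weight $\gamma$ and the slack penalty $m$ that makes the cancellation go through.
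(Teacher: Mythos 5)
Parts~(i) and~(ii) of your proposal track the paper's own sketch closely: the paper also forms the Lagrangian, invokes the KKT conditions (necessary and sufficient here), and cites the standard CBF invariance theorem of Ames--Xu--Grizzle--Tabuada since the $F_2$ row of the QP is unconditionally satisfied. Those two parts are fine.

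Part~(iii) has a genuine gap, and it is exactly in the spot you flag as ``the hard part.'' You attempt to prove $\dot V<-\alpha(|x|)$ \emph{everywhere}, including in the doubly-active region $\Omega^{\rm clf}_{\rm cbf}$. Your algebra up to
\begin{equation*}
\dot V+\alpha(|x|)=a_1-\tfrac{m}{m+1}\bar a_1-\tfrac{\mu_2}{m+1}\,b_1b_2^\top
\end{equation*}
is correct, and $\gamma m/(m+1)=1$ kills the first difference. But the residual $-\tfrac{\mu_2}{m+1}\,b_1b_2^\top$ cannot be absorbed: dual feasibility gives $\mu_2\ge 0$, yet $b_1b_2^\top$ has no fixed sign in $\Omega^{\rm clf}_{\rm cbf}$, so the residual can be strictly positive. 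This is not an accident you can bookkeep away. The whole purpose of the slack $\delta$ in the $\gamma m$-QP is to let the CLF row be relaxed when it conflicts with the CBF row, so $\dot V>0$ is entirely possible when the barrier is active and pushes against descent. No region-by-region sign argument can establish global decrease, and the theorem does not claim it --- the conclusion (as the paper later makes explicit in Proposition~3) is \emph{local} asymptotic stability.

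The missing idea is the local one the paper actually uses. Since $0\in\operatorname{int}C$, we have $h(0)>0$ and hence $a_2(0)=-\alpha_h(h(0))<0$. The small control property gives $u^\star(x)\to 0$ as $x\to 0$, so by continuity $F_2=a_2+b_2u^\star<0$ in a neighborhood of the origin: the barrier constraint is strictly inactive there. On that neighborhood only the two ``$\overline{\rm cbf}$'' cases of \eqref{eq:QPcontrol} occur, and the control reduces to the pointwise min-norm (Freeman--Kokotovi\'c) formula, whose local asymptotic stability under a CLF with the small control property is classical. Your part~(iii) should be replaced by this two-step local argument; the computation you started for $\Omega^{\rm clf}_{\rm cbf}$ should be dropped, since it proves neither what is claimed nor what is needed.
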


\begin{proof}[Sketch of proof]

The Lagrangian $\mathcal{L}$ for the $\gamma m$-QP \eqref{eq:QP} is given by
	\begin{equation}
		\mathcal{L}:=\frac{1}{2}(u^\top u +m \delta^\top\delta)+\lambda_1(\bar{a}_1+b_1(u+\delta))+\lambda_2(a_2+b_2u),
	\end{equation}
	where $\lambda_1,\lambda_2\ge 0$ are scalar Lagrange multipliers. The KKT conditions are given by
	\begin{subequations}\label{eq:KKT}
		\begin{eqnarray}
			&&\dfrac{\partial \mathcal{L}}{\partial u}=u^\top+\lambda_1b_1+\lambda_2b_2=0,\\
			&&\dfrac{\partial \mathcal{L}}{\partial \delta}=m\delta^\top +\lambda_1 b_1=0,\label{eq:KKT-b}\\
			&&\lambda_1F_1:=\lambda_1[\bar{a}_1+b_1(u+\delta)]=0,\\
			&&\lambda_2F_2:=\lambda_2[a_2+b_2u]=0.
		\end{eqnarray}
	\end{subequations}
        The unique optimal solution \( u^\star(x) \) in \eqref{eq:QPcontrol} is derived directly from \eqref{eq:KKT}. In fact, the KKT conditions in \eqref{eq:KKT} are necessary and sufficient for \( u^\star(x) \) to be an optimal solution to the $\gamma m$-QP \eqref{eq:QP}. The forward invariance of the set ${C}$ follows directly from
        \cite{ames2017control} since the CBF constraint $F_2$ is satisfied for all $x\in\mathbb{R}^n$. To show LAS to $0\in\operatorname{int}{C}$, we first note that $a_2(0)=-\alpha_h(h(0))<0$. Due to the small control property, we have $u^*(x)\to 0$ as $x\to 0$. Hence, the CBF constraint $F_2:=a_2+b_2 u^\star<0$ in a neighborhood of the origin.  That is, the barrier constraint is inactive around the origin. Then, the control law is obtained by combining the case $x\in\Omega_{\overline{\rm{cbf}}}^{\overline{\rm{clf}}} \cup \{0\}$ and the case $x\in\Omega_{\overline{\rm{cbf}}}^{{\rm{clf}}}$, which coincides with the PMN formula in \cite{freeman1996robust} and achieves asymptotic stability. 
\end{proof}

We are now prepared to present the main result about the design of the safety-critical stabilization controller. Let us define $\bar{u}:=[\frac{u_v}{\rho},u_\omega]^\top$, $f_{\kappa}(\rho,\phi,\alpha):=[-v\cos(\alpha),~\frac{v}{\rho}\sin\alpha - \omega,~ \frac{v}{\rho}\sin\alpha]^\top$,
\begin{align*}
    f_1&:=\begin{bmatrix}
        f_{\kappa}(\rho,\phi,\alpha)\\ 
        -\frac{\dot{v}^*}{\rho}+k_\rho\cos(\alpha)^2z + \cos(\alpha)z^2\\
        -\dot{\omega}^*
    \end{bmatrix},\quad
    g_1:=\begin{bmatrix}
        0_{3\times 2} \\ I_{2}
    \end{bmatrix},\\
    f_2&:=\begin{bmatrix}
        v\cos\theta ~ v\sin\theta ~\omega ~ 0 ~ 0
    \end{bmatrix}^\top,\quad
    g_2:=\begin{bmatrix}
        0_{3\times 2} \\ \operatorname{diag}(\rho,1)
    \end{bmatrix}.
\end{align*}
Then, the $\gamma m$-QP problem is formulated as
\begin{align}
&\min ~ \frac{1}{2}(\bar{u}^\top \bar{u} +m \delta^\top\delta) \label{eq:QP1}\\
\text{s.t.~~}& F_1:=  \gamma_f(L_{f_1}V+\alpha(|\chi|))+L_{g_1}V\bar{u}+L_{g_1}V\delta\le 0 \notag\\
& F_2:=-L_{f_2} h(\mathbf{x}) - \alpha_h\left(h(\mathbf{x})\right) - L_{g_2}h(\mathbf{x}) \bar{u} \le 0   \notag
\end{align}
where $\chi:=[\rho~\alpha~\psi~z~\tilde{\omega}]^\top$, $\mathbf{x}:=[x~y~\theta~v~\omega]^\top$, $\alpha:=\frac{\epsilon W\dot{W}|_{f_{\text{nom}}}}{(W+1)^2}-\frac{1}{2}|\zeta|^2$, $\alpha_h\in\mathcal{K}$, and $\epsilon>0$ is chosen to be sufficiently small.

The following proposition follows directly as a corollary of Propositions~\ref{prop:1} and~\ref{prop:2}, together with Theorem~\ref{thm:1}.

\begin{proposition}\label{prop:3}
    The $\gamma m$-QP problem \eqref{eq:QP1} is feasible, and under the resulting control law, the set $\operatorname{int}{C}$ is forward invariant. If $0\in\operatorname{int}{C}$, then the barrier constraint is inactive ($F_2<0$) around the origin, and the resulting control law is continuous. If we select $\frac{\gamma m}{m+1}=1$, the origin of the closed-loop system is locally asymptotically stable.   
\end{proposition}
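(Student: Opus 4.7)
The plan is to obtain Proposition~\ref{prop:3} as a direct corollary of Propositions~\ref{prop:1} and~\ref{prop:2} together with Theorem~\ref{thm:1}. Proposition~\ref{prop:1} supplies a global CLF $V$ (satisfying the small control property) for the error system in the coordinates $\chi=[\rho~\alpha~\psi~z~\tilde{\omega}]^\top$; Proposition~\ref{prop:2} supplies a zeroing CBF $h$ for the cascaded dynamics in $\mathbf{x}=[x~y~\theta~v~\omega]^\top$; and Theorem~\ref{thm:1} provides the closed-form KKT solution together with its invariance and stability guarantees. The QP~\eqref{eq:QP1} is exactly the instantiation of~\eqref{eq:QP} obtained by substituting these specific $V$ and $h$, with the rescaled input $\bar{u}=[u_v/\rho,\,u_\omega]^\top$ chosen so that $\dot{v}^\ast$ remains bounded along trajectories (Remark~\ref{rmk:2}).

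First I would establish feasibility. Since the slack variable $\delta$ enters only the CLF constraint $F_1$, it can always be chosen to satisfy $F_1\le 0$ for any $\bar{u}$; meanwhile the CBF constraint $F_2\le 0$ is feasible at $\bar{u}=0$ because Proposition~\ref{prop:2} verifies the CBF implication~\eqref{eq:CBF}. Hence the feasible set is nonempty for every state. Forward invariance of $\operatorname{int}{C}$ then follows directly from Theorem~\ref{thm:1}: since the CBF constraint is enforced pointwise, the standard comparison argument of~\cite{ames2017control} yields $\dot{h}\ge -\alpha_h(h)$, so any trajectory starting with $h>0$ keeps $h>0$ thanks to $\alpha_h\in\mathcal{K}$.

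Next I would analyze the behavior near the origin. If $0\in\operatorname{int}{C}$ then $h(0)=h_0(0,0)>0$, so $a_2(0)=-L_{f_2}h(0)-\alpha_h(h(0))=-\alpha_h(h_0(0,0))<0$; by continuity there is a neighborhood of the origin on which $a_2<0$. On this neighborhood, the small control property of $V$ ensures that the optimizer $\bar{u}^\star$ in any branch of~\eqref{eq:QPcontrol} satisfies $\bar{u}^\star\to 0$ as $\chi\to 0$, so $b_2\bar{u}^\star$ is small and $F_2=a_2+b_2\bar{u}^\star<0$ is preserved. Consequently the barrier constraint is inactive near the origin and the closed-loop law collapses to the CLF-only cases of~\eqref{eq:QPcontrol}. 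With the choice $\gamma m/(m+1)=1$ this formula coincides with the Freeman--Kokotovi\'{c} pointwise-min-norm controller of~\cite{freeman1996robust}, which is continuous at the origin and delivers local asymptotic stability of $\chi=0$, hence, by Remark~\ref{rmk:3}, of $\mathbf{x}=0$.

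The main obstacle I anticipate is the bookkeeping of the two coordinate frames and the input rescaling. $V$ is naturally written in the polar/error coordinates while $h$ lives in $\mathbf{x}$, and the QP~\eqref{eq:QP1} uses the scaled input $\bar{u}$. One must check that the Lie derivatives $L_{f_1}V$, $L_{g_1}V$, $L_{f_2}h$, $L_{g_2}h$ are well defined and locally Lipschitz away from $\rho=0$, so that the QP-based controller inherits the local Lipschitz continuity stated in Section~\ref{sec:problem} on every compact subset of $C\setminus\{0\}$, and that the inverse transformation $u_v=\rho\bar{u}_v$ preserves continuity of the realized physical input at the origin despite the factor $\rho$. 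The latter is ensured precisely by Remark~\ref{rmk:2}: the cancellation of $1/\rho$ in $\dot{v}^\ast$ implies that $u_v$ vanishes smoothly as $\chi\to 0$, closing the continuity argument.
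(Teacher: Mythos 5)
Your proposal is correct and follows the paper's approach exactly: the paper states Proposition~\ref{prop:3} ``follows directly as a corollary of Propositions~\ref{prop:1} and~\ref{prop:2}, together with Theorem~\ref{thm:1},'' and your argument is precisely that unfolding\;---\;feasibility via the slack variable and the CBF implication, forward invariance from the enforced $F_2\le 0$, inactivity of the barrier constraint near the origin from $a_2(0)=-\alpha_h(h_0(0,0))<0$ plus the small control property, and continuity/LAS from the reduction to the Freeman--Kokotovi\'c PMN formula when $\gamma m/(m+1)=1$. Your added remarks on the input rescaling $\bar{u}=[u_v/\rho,\,u_\omega]^\top$ and the cancellation in $\dot{v}^\ast$ (Remark~\ref{rmk:2}) are a useful, if informal, supplement that the paper itself leaves implicit.
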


\section{Simulation and Experimental Results}\label{sec:experiments}

This section presents both simulation and experimental results obtained using a laboratory-size differential-drive mobile robot, designed to evaluate the practical effectiveness and performance of the proposed safety-critical stabilization controller. 

\subsection{Simulation Results}
The physical properties of the nonholonomic mobile robot were measured as 
\begin{eqnarray*}
    m= 1.0, \quad I = 0.025, \quad r = 0.03, \quad  R = 0.15.
\end{eqnarray*}
All parameters are given in SI units. The initial conditions of the robot are randomly selected as $(x_0,y_0,\theta_0)=(-3.15, 2.96, -1.43)$, and the robot is initially at rest. The target position is set at the origin.  To illustrate the effectiveness of the proposed approach, three controllers (i.e., nominal controller \eqref{eq:16}, CLF-QP \eqref{eq:V}, CLF-CBF-QP \eqref{eq:QPcontrol}) were implemented and compared. We assume that a circular obstacle is located at $(-2,0)$ with radius $r=0.3$. That is, the admissible set is given by $C_0:=\{(x,y)\in\mathbb{R}^2:h_0(x,y)=40((x + 2)^2 + y^2 - 0.3^2)\}$. We define $\alpha:=\frac{\mu W\dot{W}|_{f_{\text{nom}}}}{2(W+1)^2}-\frac{1}{2}|\zeta|^2$ and $\alpha_h(s):=2s$. The control parameters are set to $\lambda=3$, $k_\rho=2$, $k_\alpha=2$, $k_z = 4$, $k_\omega=4$, $\mu=0.05$, $l_v=1$, $l_\omega=1$ and $m=1$. The simulation results are shown in Figs. \ref{fig:sim_path}-\ref{fig:sim_traj}, which demonstrate that the proposed CLF-CBF $\gamma m$-QP controller
effectively achieves parking with obstacle avoidance.

\begin{figure}[t]
    \centering
    \includegraphics[width=0.7\linewidth]{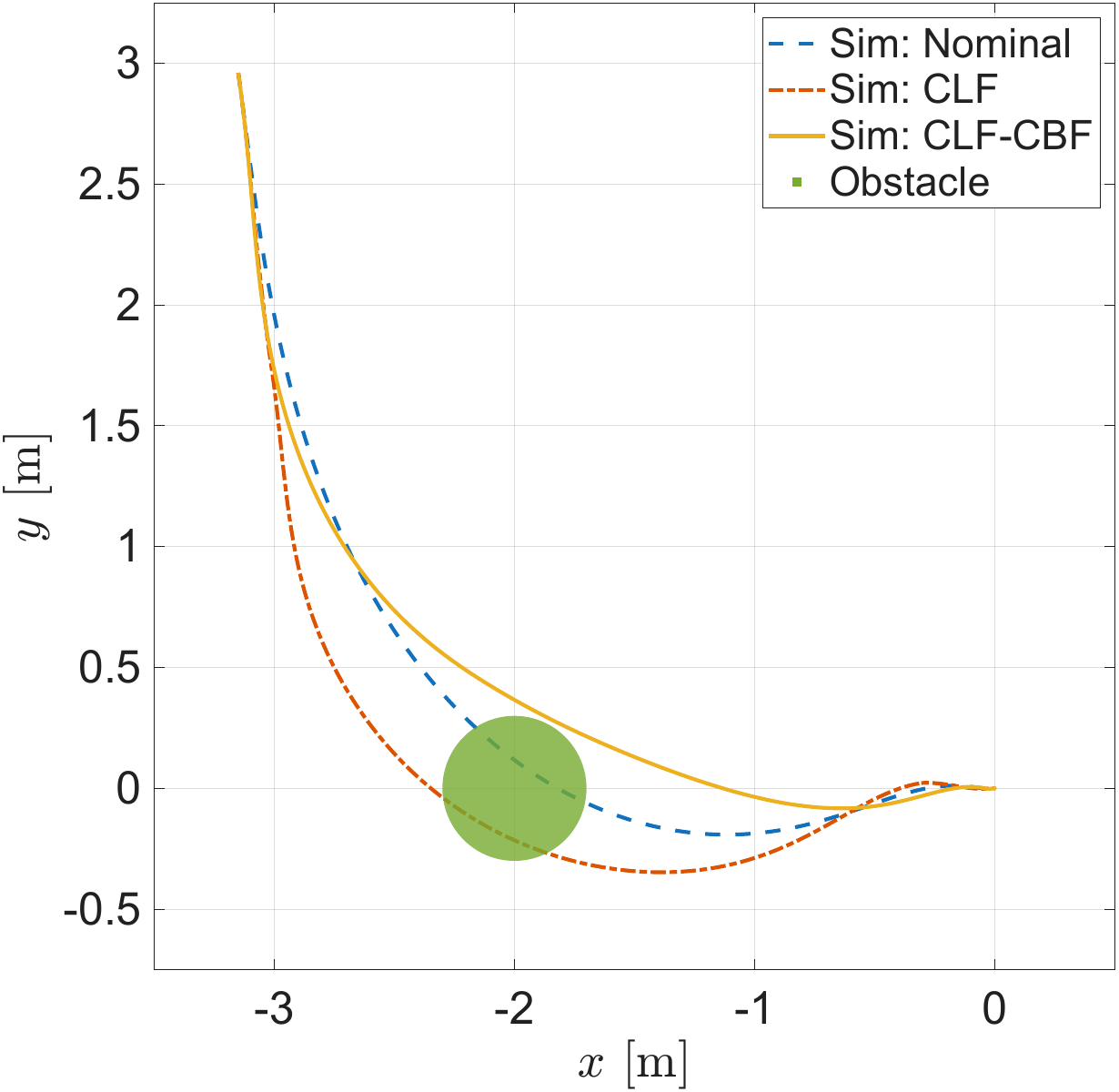}
    \caption{Simulation paths of the robot in the XY plane.}
    \label{fig:sim_path}
\end{figure}

\begin{figure}[t]
    \centering
    \includegraphics[width=0.8\linewidth]{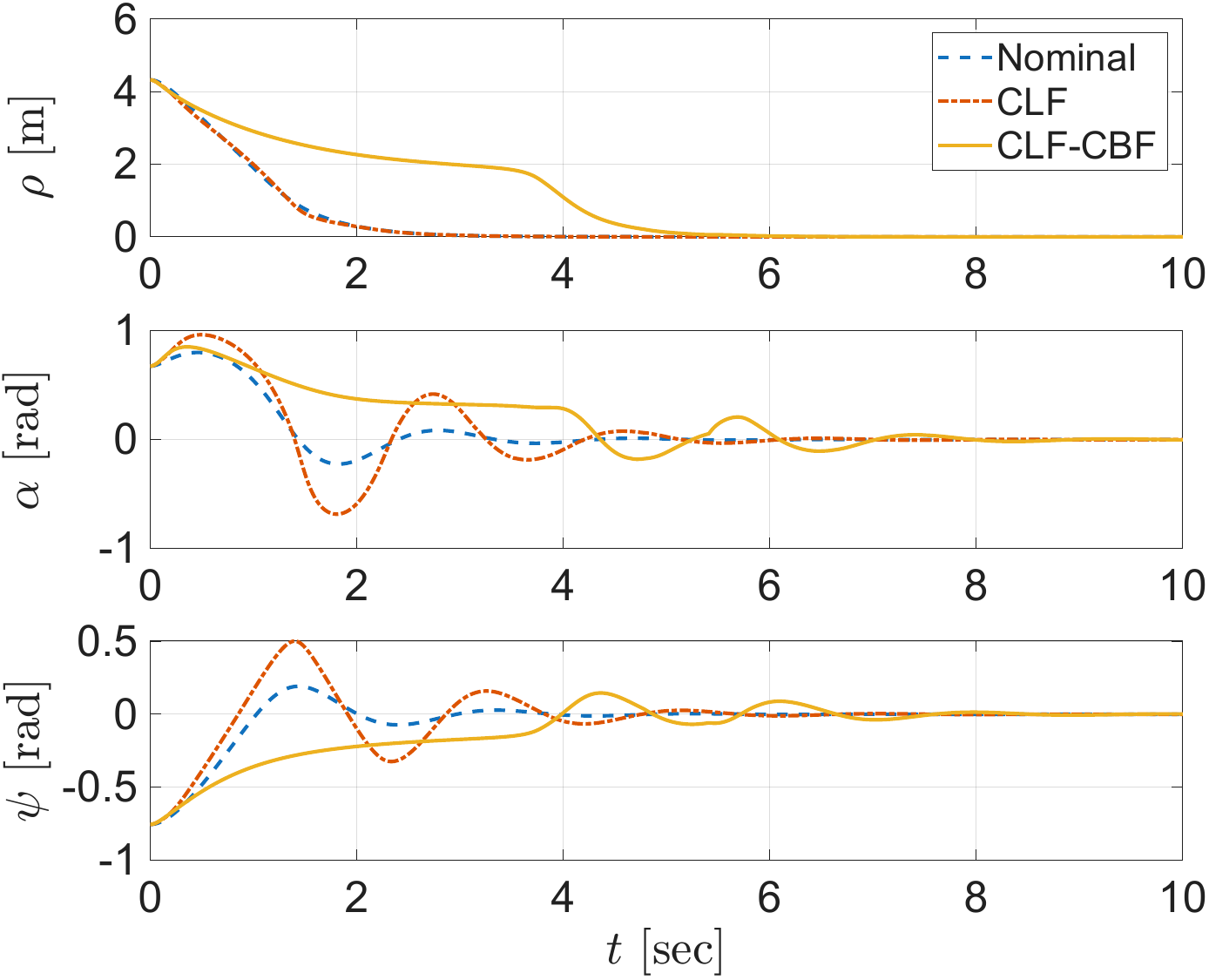}
    \caption{Simulation trajectories of the robot in polar coordinates.}
    \label{fig:sim_traj}
\end{figure}

\subsection{Experimental Results}

The experiments were conducted in the Autonomous Systems and Control Laboratory (ASCL) at the City College of New York. The experiment setup is shown in Fig. \ref{fig:exp_frame}. The experimental setup comprises a differential-drive nonholonomic mobile robot operating within a 6 m $\times$ 6 m workspace. High-precision global localization is achieved using a VICON motion capture system equipped with eight Vero 2.2 cameras, operating at 330 Hz with an accuracy of 1 mm. The computational architecture consists of a host PC for data processing and data streaming, and a laptop dedicated to executing the proposed control algorithm. The proposed safety-critical stabilization algorithm is implemented on the laptop using MATLAB/Simulink R2025a. An unpowered robot was strategically placed at $(-0.6,0.4)$ to serve as a static obstacle to evaluate avoidance capabilities. The admissible set is given by $C_0:=\{(x,y)\in\mathbb{R}^2:h_0(x,y)=40((x + 0.6)^2 + (y-0.4)^2 - 0.2^2)\}$. The robot was initially at $(x_0,y_0,\theta_0)=(-1.08,1.37,0.78)$, with both linear and angular velocities set to zero. The target position was set at the origin. The same control parameters as in the simulations were used. The experimental results are shown in Figs. \ref{fig:exp_path}-\ref{fig:exp_traj}, which illustrate that the proposed CLF-CBF $\gamma m$-QP controller successfully performs parking while avoiding obstacles. It should be noted that the experimental trajectory differs from the simulation trajectory in Figs. \ref{fig:exp_path}-\ref{fig:exp_traj}, and the angular error does not converge exactly to zero. This discrepancy is primarily due to actuator saturation and the dead-zone effect when the control input is small.

\begin{remark}\rm
    In the experiments, a safety margin was deliberately introduced to account for the physical size of the mobile robot. Specifically, the effective obstacle boundaries used in the CBF ($r=0.2$) were defined to be slightly larger than the actual physical obstacles. This ensures that even when the robot’s center of mass reaches the boundary of the defined obstacle, the robot’s physical body still remains clear of contact with the real obstacle.
\end{remark}

\begin{figure}[t]
    \centering
    \includegraphics[width=0.95\linewidth]{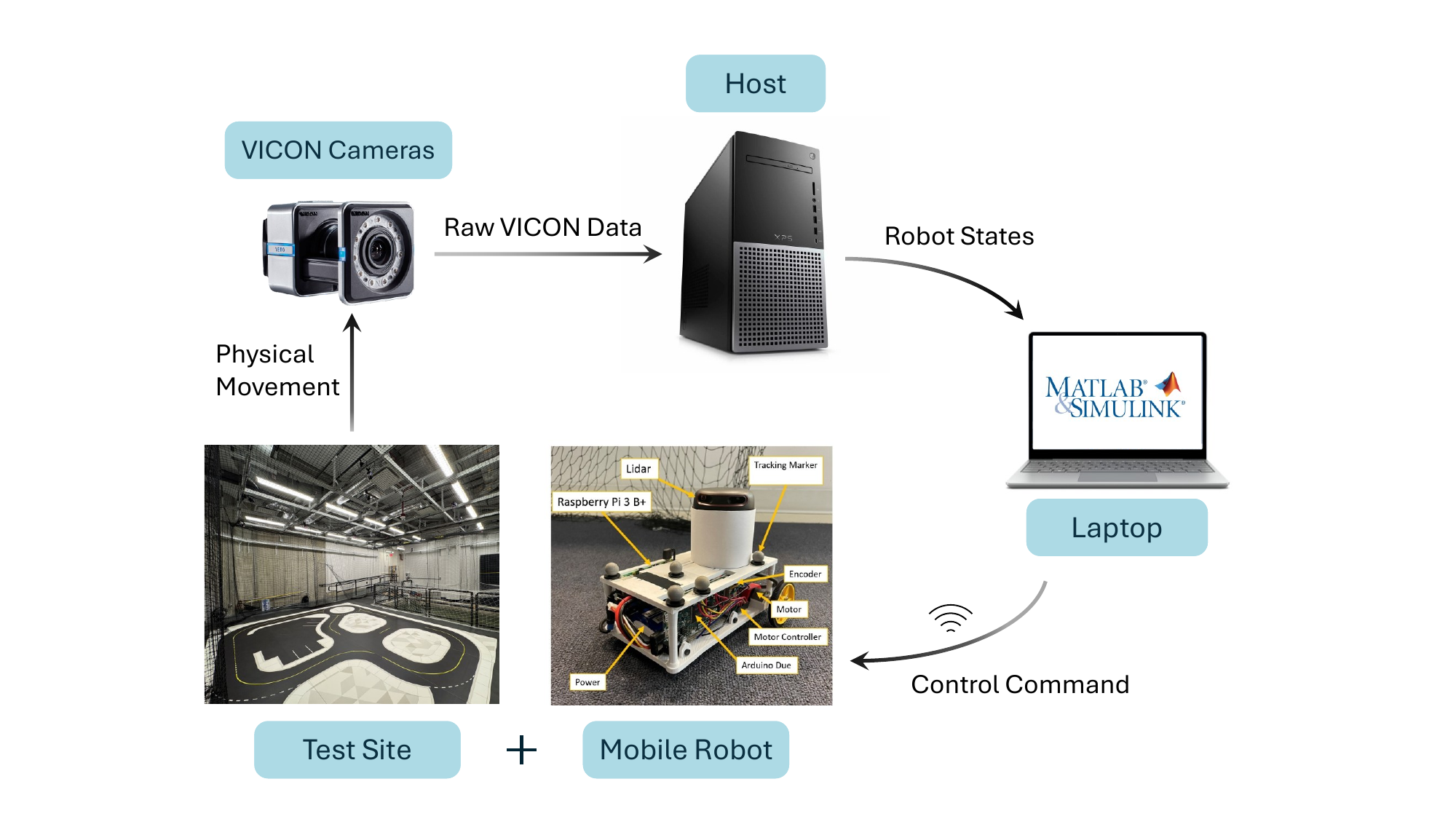}
    \caption{Experimental system framework.}
    \label{fig:exp_frame}
\end{figure}

\begin{figure}[t]
    \centering
    \includegraphics[width=0.7\linewidth]{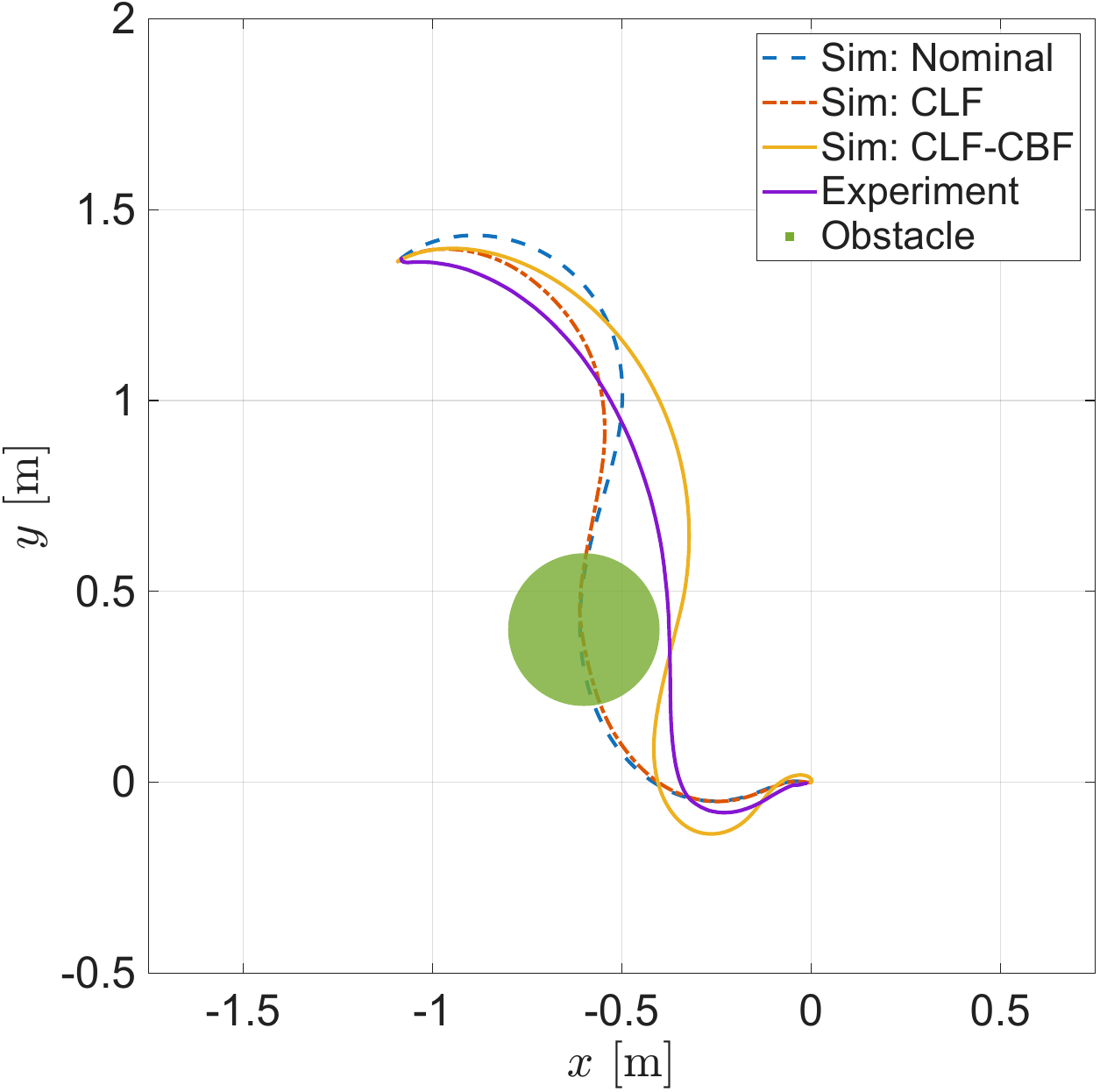}
    \caption{Paths of the robot in the XY plane.}
    \label{fig:exp_path}
\end{figure}

\begin{figure}[t]
    \centering
    \includegraphics[width=0.8\linewidth]{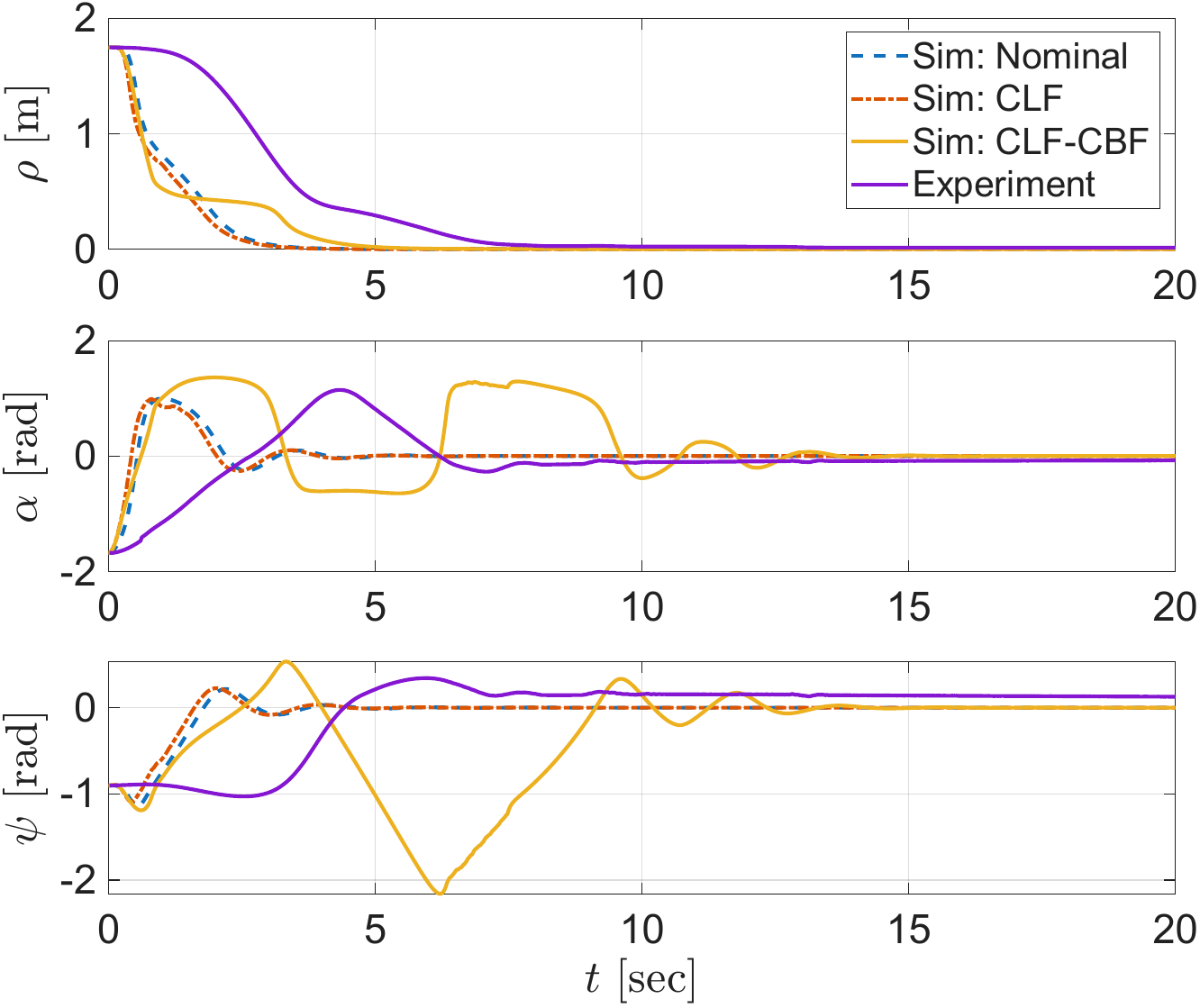}
    \caption{Trajectories of the robot in polar coordinates.}
    \label{fig:exp_traj}
\end{figure}

\section{Conclusions}\label{sec:conclusions}

This work presents a continuous, time-invariant control strategy grounded in the $\gamma m$-QP framework, which integrates CLFs and CBFs to ensure both stability and safety for the closed-loop system. Notably, we develop a global, time-invariant, strict Lyapunov function for a nonholonomic mobile robot system, utilizing a nominal stabilization controller in polar coordinates. This strict Lyapunov function is subsequently employed as the global CLF in the QP formulation. Furthermore, by leveraging the inherent cascaded structure of the vehicle's dynamics, we construct a CBF for the mobile robot through an integrator backstepping approach. The main results guarantee that the closed-loop system achieves both asymptotic stability and safety. Experimental validations are provided to demonstrate the efficacy and performance of the proposed method. Future research will focus on extending this framework to address safety formation control in multi-agent systems, incorporating robustness analysis and explicitly accounting for input saturation.

\section*{Acknowledgment} 
{The first author would like to express his gratitude to Prof.\ Miroslav Krsti\'c of UCSD for the fruitful discussions regarding the construction of the global strict Lyapunov function and Remarks~\ref{rmk:2} and~\ref{rmk:3}. The authors would also like to thank the Associate Editor and the anonymous reviewers for their useful suggestions.}

The work of Bo Wang was supported in part by GSoE at CCNY, and in part by the PSC-CUNY Award, jointly funded by The Professional Staff Congress and The City University of New York. 

\section*{Conflict of Interest}
There are no conflicts of interest.

\section*{Data Availability Statement}
The datasets generated and supporting the findings of this article are obtainable from the corresponding author upon reasonable request.

\appendix
\section{Proof of Proposition \ref{prop:1}}\label{app:A}

First, we show that the function \( V \) is positive definite and proper. Since \( U \) is a positive definite quadratic form, it suffices to show that the function \( W^\sharp \) is positive definite and proper in its arguments. Direct calculation yields
\[
\operatorname{det}(P) = \frac{k_\alpha^2 + k_\rho^2 \lambda^2 + 2k_\rho^2 \lambda + k_\rho^2}{4k_\alpha^2 k_\rho^2 \lambda} > 0.
\]
Hence, the matrix \( P = P^\top>0 \), implying that \( W_2 \) is a positive definite quadratic form. Since \( W_1 \) is also a positive definite quadratic form, it follows that \( W \) is positive definite and proper. Consequently, \( W^\sharp \) is also positive definite and proper in its arguments.

Next, according to the definition of a CLF, to establish that \( V \) is a CLF, we need to show that for all \( (\rho, \alpha, \psi, z, \tilde{\omega}) \neq 0 \), there exists a control input \( (u_v, u_\omega) \) such that 
$\dot{V}|_{\eqref{eq:kinematics}, \eqref{eq:dynamics}} < 0$.
We demonstrate this by explicitly constructing a nominal control law \( (u_v, u_\omega) \).

In the new velocity coordinates $\tilde{\omega}:=\omega-\omega^*$, $z:=(v-v^*)/{\rho}$, the kinematics \eqref{eq:kinematics} become
\begin{equation}\label{eq:14}
    \begin{bmatrix}
        \dot{\rho}\\ \dot{\alpha} \\ \dot{\psi}
    \end{bmatrix}=
    \underbrace{\begin{bmatrix}
        -k_\rho\cos(\alpha)^2 \rho \\ -k_\alpha \alpha - k_\rho\operatorname{sinc}(2\alpha)\lambda\psi\\ k_\rho\operatorname{sinc}(2\alpha)\alpha
    \end{bmatrix}}_{f_{\text{nom}}(\rho,\alpha,\psi)}+
    \underbrace{\begin{bmatrix}
        -\rho\cos(\alpha) & 0 \\
        \sin(\alpha) & 1 \\
        \sin(\alpha) & 0
    \end{bmatrix}}_{g_{\text{nom}}(\rho,\alpha)}
    \begin{bmatrix}
        z\\ \tilde{\omega}
    \end{bmatrix}.
\end{equation}

Also, the velocity dynamics in the new coordinates are given by
\begin{equation}\label{eq:115}
    \begin{cases}
        \dot{z}=\dfrac{1}{\rho}(u_v-\dot{v}^*)+k_\rho\cos(\alpha)^2 z + \cos(\alpha)z^2,\\
        \dot{\tilde{\omega}}=u_\omega-\dot{\omega}^*.
    \end{cases}
\end{equation}
The nominal control law \( (u_v, u_\omega) \) can be selected as the feedback linearization control law
\begin{equation}\label{eq:16}
    \begin{cases}
        u_v=\dot{v}^*-\rho\left[k_\rho\cos(\alpha)^2 z + \cos(\alpha)z^2+k_z z \right],\\
        u_\omega=\dot{\omega}^*-k_\omega \tilde{\omega},
    \end{cases}
\end{equation}
which yields the linear closed-loop velocity dynamics
\begin{equation}\label{eq:17}
    \dot{z}=-k_z z,\quad \dot{\tilde{\omega}}=-k_\omega \tilde{\omega}.
\end{equation}
Next, we show that $\dot{V}|_{\eqref{eq:14},\eqref{eq:17}}<0$ for all \( (\rho, \alpha, \psi, z, \tilde{\omega}) \neq 0 \). 

Noting that the nominal closed-loop system \eqref{eq:14}, \eqref{eq:17} exhibits a cascaded structure, we first consider the subsystem \eqref{eq:14} restricted to the manifold \( \{ z = \tilde{\omega} = 0 \} \). Evaluating the total derivative of \( W_1 \) along the vector field \( f_{\text{nom}} \) in \eqref{eq:14} yields
\begin{equation}
    \dot{W}_1|_{f_{\text{nom}}}:=\langle \nabla W_1, f_{\text{nom}} \rangle = -k_\rho \cos^2(\alpha) \rho^2 - k_\alpha \alpha^2 \le 0,
\end{equation}
where $\nabla$ represents the gradient and $\langle\cdot,\cdot\rangle$ represents the inner product. Then, by adding and subtracting the terms $-k_\rho\lambda\psi$ and $k_\rho\alpha$ in second and third rows in $f_{\text{nom}}$, respectively, the $\alpha$- and $\psi$-dynamics restricted to the manifold \( \{ z = \tilde{\omega} = 0 \} \) are given by
\begin{equation}\label{eq:19}
    \begin{bmatrix}
        \dot{\alpha} \\ \dot{\psi}
    \end{bmatrix}=
    \underbrace{\begin{bmatrix}
        -k_\alpha & -k_\rho\lambda \\ k_\rho & 0
    \end{bmatrix}}_{A}
    \underbrace{\begin{bmatrix}
        \alpha \\ \psi
    \end{bmatrix}}_{\xi}
    +
    \underbrace{\begin{bmatrix}
        -\lambda k_\rho (\operatorname{sinc}(2\alpha)-1) \psi \\
        k_\rho (\operatorname{sinc}(2\alpha)-1) \alpha
    \end{bmatrix}}_{K(\alpha,\psi)}.
\end{equation}
Since the matrix $A$ in \eqref{eq:19} is Hurwitz, the Lyapunov equation $A^\top P + PA = -I$ has a unique, positive definite solution $P$, which is given in Proposition \ref{prop:1}. In other words, denoting $\xi:=[\alpha~\psi]^\top$, $W_2$ is a strict Lyapunov function for the linear system $\dot{\xi}=A\xi$, i.e., $\langle W_2,A\xi \rangle=-|\xi|^2$. It is easy to show that $|\operatorname{sinc}(2s)-1|\le \frac{2}{\pi}|s|$, and thus, we have $|K(\alpha,\psi)|\le \frac{2}{\pi}k_\rho \lambda |\alpha||\xi|$. 
\footnote{Note that $\displaystyle{\sup_{s\in\mathbb{R}}}\dfrac{\sin(s)-s}{s^2}=\dfrac{1}{\pi}$.}
The total derivative of $W_2$ along trajectories of \eqref{eq:19} is then given by
\begin{eqnarray}
    \dot{W}_2|_{\eqref{eq:19}}&=&-|\xi|^2+2\xi^\top PK(\alpha,\psi) \label{eq:20}\\
    &\le & -|\xi|^2+\frac{2}{\pi}k_\rho \lambda \lambda_M(P)|\xi| (2|\alpha|\cdot|\xi|) \notag \\
    &\le & -|\xi|^2+\frac{2}{\pi}k_\rho \lambda \lambda_M(P)|\xi| \left(\varepsilon|\xi|^2+\frac{\alpha^2}{\varepsilon}\right),
\end{eqnarray}
where the last inequality is due to Young's inequality, and $\varepsilon>0$ can be chosen as an arbitrary positive number. Hence, for $|\xi|\ne 0$, letting $\varepsilon:=\frac{\pi}{4k_\rho\lambda \lambda_M(P)|\xi|}>0$, it follows that 
\begin{eqnarray}
    \dot{W}_2|_{\eqref{eq:19}} &\le & -\frac{1}{2}|\xi|^2 +\frac{8}{\pi^2}k_\rho^2 \lambda^2 \lambda_M^2(P)|\xi|^2\alpha^2 \notag \\
    &\le & -\frac{1}{2}|\xi|^2 +k_\alpha Q(W_1(\rho,\alpha,\psi))\alpha^2,\label{eq:21}
\end{eqnarray}
where in the last inequality we use $|\xi|^2\le 2W_1(\rho,\alpha,\psi)$. For $|\xi|=0$, it follows from \eqref{eq:20} that \eqref{eq:21} is also true. Consequently, we have 
\begin{equation}
    \dot{W}|_{f_{\text{nom}}}:=\langle \nabla W,f_{\text{nom}} \rangle \le -\frac{1}{2}|\xi|^2 - k_\rho\cos^2(\alpha)\rho^2<0.
\end{equation}
That is, $W$ is a global, strict Lyapunov function for the subsystem \eqref{eq:14} restricted to the manifold \( \{ z = \tilde{\omega} = 0 \} \). One can easily prove that $W^\sharp$ is also a global, strict Lyapunov function for the subsystem \eqref{eq:14} restricted to the manifold \( \{ z = \tilde{\omega} = 0 \} \). 

Denoting $\zeta:=[z~\tilde{\omega}]^\top$, direct calculation shows that
\begin{eqnarray}\label{eq:A11}
    \dot{V}|_{\eqref{eq:14},\eqref{eq:17}}&=&\frac{\mu W\dot{W}|_{f_{\text{nom}}}}{(W+1)^2}+\frac{\mu W}{W+1}\frac{ L_{g_{\text{nom}}}W}{{W+1}}\cdot\zeta-|\zeta|^2.
\end{eqnarray}
Note that the first and third terms on the right-hand side of \eqref{eq:A11} are negative definite terms, while the second term is indefinite. In the second term, ${L_{g_{\text{nom}}}W}/{({W+1})}$ is globally bounded, i.e., $\exists c>0$ such that $|{ L_{g_{\text{nom}}}W}/{({W+1})}|\le c$. Hence, together with Young's inequality we have that
\begin{eqnarray}
    \frac{\mu W}{W+1}\frac{ L_{g_{\text{nom}}}W}{{W+1}}\cdot\zeta&\le &  \frac{c\mu W}{W+1}\cdot|\zeta|\\
    &\le & \frac{c^2\mu^2}{2}\left(\frac{ W}{W+1}\right)^2+\frac{1}{2}|\zeta|^2. \label{eq:A13}
\end{eqnarray}
The term $-|\zeta|^2$ in \eqref{eq:A11} dominates the term $\frac{1}{2}|\zeta|^2$ in \eqref{eq:A13}. Moreover, the term $\frac{\mu W \dot{W}|_{f_{\text{nom}}}}{(W+1)^2}$ in \eqref{eq:A11} dominates the term $\frac{c^2\mu^2}{2} \left( \frac{W}{W+1} \right)^2$ in \eqref{eq:A13} near the origin, since the latter has a higher degree. Away from the origin, there exists a sufficiently small $\mu > 0$ such that $\frac{\mu W \dot{W}|_{f_{\text{nom}}}}{(W+1)^2}$ continues to dominate $\frac{c^2\mu^2}{2} \left( \frac{W}{W+1} \right)^2$, due to the fact that $W/(W+1)$ remains globally bounded. Therefore, we conclude that $\dot{V}|_{\eqref{eq:14},\eqref{eq:17}} < 0$ is negative definite.

Finally, we conclude the proof by noting that \eqref{eq:16} is continuous, and $|(u_v, u_\omega)|\to 0$ as $|(\rho, \alpha, \psi, z, \tilde{\omega})|\to 0$, which establishes the small control
property.\qed

\bibliographystyle{asmejour} 

\bibliography{asmejour} 

\end{document}